\newtheorem{theorem}{Theorem}
\newtheorem{remark}{Remark}
\newtheorem{assum}{Assumption}
\begin{document}
%
\title{Adaptive-Robust Control of a Class of Nonlinear Systems with Unknown Input Delay}
%
%
%

\author{Spandan~Roy
        and~Indra~Narayan~Kar
\thanks{S. Roy and I. N. Kar are with the Department
of Electrical Engineering, Indian Institute of Technology-Delhi, New Delhi,
India e-mail: (sroy002@gmail.com, ink@ee.iitd.ac.in).}
\thanks{}}

%
%

\markboth{Arxiv-CS}%
{Shell \MakeLowercase{\textit{et al.}}: Bare Demo of IEEEtran.cls for Journals}
%



\maketitle

\begin{abstract}
In this paper, the tracking control problem of a class of uncertain Euler-Lagrange systems subjected to unknown input delay and bounded disturbances is addressed. To this front, a novel delay dependent control law, referred as Adaptive Robust Outer Loop Control (AROLC) is proposed. Compared to the conventional predictor based approaches, the proposed controller is capable of negotiating any input delay, within a stipulated range, without knowing the delay or its variation. The maximum allowable input delay is computed through Razumikhin-type stability analysis. AROLC also provides robustness against the disturbances due to input delay, parametric variations and unmodelled dynamics through switching control law. The novel adaptive law allows the switching gain to modify itself online in accordance with the tracking error without any prerequisite of the uncertainties. The uncertain system, employing AROLC, is shown to be Uniformly Ultimately Bounded (UUB). As a proof of concept, experimentation is carried out on a nonholonomic wheeled mobile robot with various time varying as well as fixed input delay, and better tracking accuracy of the proposed controller is noted compared to predictor based methodology. 
\end{abstract}

\begin{IEEEkeywords}
Adaptive-robust control, Euler-Lagrange systems, Input delay, Razumikhin theorem, Wheeled mobile robot.
\end{IEEEkeywords}

%
\IEEEpeerreviewmaketitle

\section{Introduction}
%
%
%
%
\subsection{Background and Motivation}
Time delays could be inherent in the system or may be a result of some sources such as transportation and transmission lags, communication delays etc. A few applications where delays are evident include, chemical and biological processes, teleoperated robotic systems, rolling mills, delays due to sensor response and input delay in mobile robots, engine cycle delays in internal combustion engine, control over network etc. \cite{Ref:1}-\cite{Ref:3}. Left unattended, such delay reduces performance of the system and may create potential instability to the system. The survey articles \cite{Ref:1}-\cite{Ref:5} document the recent advances, challenges and controllers developed to tackle the unwanted delay for linear systems \cite{Ref:1}-\cite{Ref:4} and nonlinear systems \cite{Ref:5}. The control strategies reported in literature can broadly be classified in two categories, one for linear systems (\cite{Ref:6}-\cite{Ref:30}) and other for nonlinear systems (\cite{Ref:20}-\cite{Ref:35}). 
\par Controllers developed for linear systems are first discussed. Control law, that predicts the input delay to the system is one the most studied as well as used method and has a root to the Smith predictor based approach \cite{Ref:6}. However, performance of this method depends on modelling accuracy of the system and can be applied to stable systems only. Artstein model reduction \cite{Ref:7}, finite spectrum assignment \cite{Ref:8}, on the contrary can be applied to unstable and multivariable plants.  The predictor based approach transforms the delayed system into a delay free system using the control input history. The prediction scheme of \cite{Ref:8} was extended in \cite{Ref:28} when state delay is also present along with the input delay. A discrete time predictor based controller is reported in \cite{Ref:9} for uncertain linear systems. A reduction method to design a delayed feedback based controller for uncertain linear systems with time varying input delay is proposed by \cite{Ref:10}. However, this approach requires exact knowledge of the input delay to compensate its effect. \cite{Ref:11} designed a robust controller for time varying input delay combining Lyapunov-Krasvoskii functional and a neutral transformation, assuming bound of the first derivative of delay is available. A predictor based state feedback controller is proposed in \cite{Ref:12}-\cite{Ref:13} for stabilization of uncertain discrete time systems via LMI technique. In the predictor based approach of \cite{Ref:19}, an equivalent transformation is used on the characteristic equation to convert the predictor feedback control into a stabilization problem of neutral time delay system for constant time delay. Predictor feedback control law, based on small gain analysis, for linear time invariant systems with unknown time varying delay is designed in \cite{Ref:27}.
 
\par Adaptive control techniques are addressed in \cite{Ref:14}-\cite{Ref:17} for linear systems. \cite{Ref:14} designed an adaptive control law for feedforward linear systems subjected to both state and input delay, where the unknown time delay is predicted adaptively using a projection function without considering plant uncertainties. \cite{Ref:16} considered linearly parametrized uncertainty in plants with known bounds. Adaptive backstepping method is used in \cite{Ref:17} for robust stabilization problem of linear non-minimum phase systems subjected to unknown dynamics and input delay. Sliding mode control strategies are considered in \cite{Ref:15}-\cite{Ref:30}. However, \cite{Ref:15} only considered constant time delay and uncertainties in plant parameters were ignored. \cite{Ref:30} used integral sliding mode control for uncertain linear systems under time varying delay at the input but it requires known bounds of the uncertainties as well as its first time derivative and can only tackle slowly varying input delay. On the other hand, \cite{Ref:30} used adaptive sliding mode control for linear input delayed systems where bound on the uncertainties are estimated online. However, estimation of the uncertainties becomes a monotonically increasing function of states for non zero values of states. This may cause very high switching gain and consequent chattering.

\par Nonlinear systems with input delay pose a greater challenge since linear boundedness of plant model cannot be incorporated in the stability proof \cite{Ref:5}. Compared to linear systems, results for nonlinear systems with time delay at the input is limited. \cite{Ref:20} provided a delay dependent Lyapunov-Razumikhin based approach on controlling nonlinear systems. Delay independent Razumikhin approach for nonlinear time delay systems with triangular structure is proposed in \cite{Ref:21}. The system was shown to be Uniformly Ultimately Bounded (UUB) assuming the delay disturbances are bounded. However, no control input was designed to negotiate the time delayed uncertainties and the delay independent solution is conservative. \cite{Ref:22} used Smith predictor based globally linearizing controller for known nonlinear system. \cite{Ref:24} proposed a backstepping based predictor approach for delay compensation of forward complete and strict-feedforward nonlinear systems utilizing ODE-PDE cascade transformation.  \cite{Ref:25}-\cite{Ref:26} developed controllers for rigid and flexible link manipulators by linearizing the system. Here, the controllers developed for delay free system is shown to be stable under some delay dependent conditions.  However, the above mentioned control laws require exact knowledge of the models which is difficult for nonlinear systems. This issue was addressed in \cite{Ref:31} where a predictor based approach is proposed for constant input delay based on Lyapunov-Krasvoskii method. In this case, uncertainty in the inertia matrix was considered. The same approach was later extended for time varying input delay in \cite{Ref:32} and also for state delay \cite{Ref:33}. A filtered tracking error based control law is used in \cite{Ref:34} to tackle the effect of known constant input time delay and used fuzzy logic systems of first type to approximate the unknown disturbances with predefined uncertainty bound. A stabilizing controller for spacecraft with unknown but constant input delay is proposed in \cite{Ref:35} with approximated uncertainty bound. However, in practice delay may not always be constant and defining prior uncertainty bounds for the uncertainties is not always feasible for nonlinear uncertain systems due to unmodelled dynamics and unknown disturbances.


\subsection{Contribution}
In this paper, a delay dependent control law, referred as Adaptive Robust Outer Loop Control (AROLC), is proposed for tracking control of a class of Euler-Lagrange systems subjected to unknown time-varying input delay and bounded uncertainty. The two major advantages of the proposed control scheme is stated below, \\
\par (i) The proposed control law does not require any explicit knowledge of the delay or its time derivatives to design the control law unlike the ones reported in \cite{Ref:31}-\cite{Ref:34}. It can negotiate any input delay within a maximum delay range, obtained from the delay dependent stability criterion of the closed loop system employing the Razumikhin-type stability approach \cite{Ref:37}.\\
\par (ii) As opposed to the controllers developed in \cite{Ref:31}-\cite{Ref:35}, the proposed adaptive-robust law does not necessitate the knowledge of any predefined uncertainty bound or its modelling. The switching gain adapts itself with an error function defined by the user. The proposed controller possesses the flexibility that a user can specify any error function based on which the switching gain would be modified while maintaining the same stability criterion. The closed loop stability analysis of the system dynamics is carried out in the sense of UUB.
\par (iii) Experimental validation of the proposed control law is also carried out on PIONEER-3 wheeled mobile robot (WMR) and compared the results with the predictive control approaches reported in \cite{Ref:31}-\cite{Ref:32}.

\subsection{Notations}
The following notations are used throughout the paper: any quantity $\rho$ delayed by an amount $h$ as $\rho(t-h)$, would be denoted as $\rho_h$; $\lambda _{min}(\cdot)$ and $|| \cdot ||$ represent minimum eigen value and Euclidean norm of the argument respectively; $I$ represents identity matrix.

\subsection{Organization}
The article is organized as follows: the detailed problem formulation is first carried out in Section 2. This is followed by the proposed adaptive-robust control methodology and its detail analysis. Section 3 presents the experimental results of the proposed controller and its comparison with \cite{Ref:32}. Section 4 concludes the entire work.

\section{Controller Design}
\subsection{Problem Formulation and Objective Definition}
In general, an Euler-Lagrange system with second order dynamics, with input delay, can be written as,
\begin{equation}\label{sys new}
M(q)\ddot{q}+N(q,\dot{q})=\tau(t-h(t)),
\end{equation}
where, $q(t)\in\Re^{n}$ is the system state, $\tau(t)\in\Re^{n}$ is the control input, $M(q)\in\Re^{n\times n}$  is the mass/inertia matrix and $N(q,\dot{q})\in\Re^{n}$ denotes combination of other system dynamics terms based on system properties. In practice, it can be assumed that unmodelled dynamics and disturbances is subsumed by $N$. $h(t)$ is a time varying delay at the input $\tau$. The delay may be result of computation delay, communication delay between the controller and the actuator etc. Let, $q^d(t)$ be the desired trajectory to be tracked and $e_1(t)=q^d(t)-q(t)$ is the tracking error. Before introducing the proposed control law, brief overview of the control structure of predictive control law, reported in \cite{Ref:32} is provided below:

\subsubsection{Predictive Controller (\cite{Ref:31}-\cite{Ref:32})}

\begin{assum}\label{assum 6}
All the disturbances and their first time derivative is bounded by some known constant. 
\end{assum}

\begin{assum}\label{assum 7}
Input delay $h(t)$ and $\ddot{h}(t)$ is bounded by a known constant and $\dot{h}(t)<1$. 
\end{assum}

Let $\varrho$ be a measurable filtered tracking error and defined as
\begin{align}
\varrho&=\dot{e}_1+\kappa e_1-\vartheta e_z \label{filter}\\ 
e_z& = \int_{t-h(t)}^{t} \tau (\theta) d\theta \label{int error}
\end{align}
where, $\kappa$ is a positive constant and $\vartheta$ is a positive definite matrix. Taking time derivative of (\ref{filter}) and multiplying it by $M$ and then using (\ref{sys new}) and (\ref{int error}) we get,
\begin{equation}
M\dot{\varrho}=M\ddot{q}^d+N-M \eta (\tau-\tau_h+ \dot{h}\tau_h)-\tau-\dot{h}\tau_h+ \kappa M \dot{e}_1,
\end{equation}
where, $\eta=\vartheta-M^{-1}$. The control input is selected as $\tau=k_b \varrho$ to track the desired trajectory provided Assumptions 1 and 2 holds. Selection methods of $k_b$ and $\kappa$ are detailed in \cite{Ref:31}-\cite{Ref:32}.

\begin{remark}
 In practice, proper bound estimation of unmodelled dynamics and unknown disturbances is almost impossible. Moreover, according to Assumption 2, knowledge of the bound of $\dot{h}$ and $\ddot{h}$ is required and the controller can only tackle slowly varying time delay \cite{Ref:32}. However, this is difficult to attain since variation in time delay $h$ can be arbitrary in practical scenario. Hence, it is not always possible to maintain Assumptions 1 and 2.
\end{remark} 
  So, the aim of this paper is to design a control law which simultaneously fulfils the following two objectives:
\par \textbf{O1}:  To maintain system stability of the input delayed system (\ref{sys new}), within a stipulated maximum time delay, while it follows a predefined desired path. This objective is to be met irrespective and without any knowledge of the nature of the variation delay $h(t)$.
\par \textbf{O2}:  To provide robustness to system (\ref{sys new}) against bounded but unknown uncertainties and disturbances.  without any knowledge of the uncertainties and its bounds. 
 

\subsection{Adaptive Robust Outer Loop Controller}
Towards achieving the outlined objectives, a novel controller, named Adaptive Robust Outer Loop Controller (AROLC) is proposed. The structure of the the proposed control law is selected to be,
\begin{equation}\label{input}
\tau=\hat{M}u+\hat{N},
\end{equation}
where, $u$ is the auxiliary control input; $\hat{M}$ and $\hat{N}$ are the nominal values of $M$ and $N$ respectively. $u$ is defined in the following way,


\begin{equation}\label{control input}
u=\hat{u}+\Delta u,
\end{equation}
where, $\hat{u}$ and $\Delta u$ are nominal and switching control input respectively and they are evaluated as, 
\begin{equation}\label{aux}
\hat{u}=\ddot{q}^d(t)+K_2\dot{e_1}(t)+K_1e_1(t),
\end{equation}
\begin{equation}\label{delta u}
 \Delta u=
  \begin{cases}
    \alpha\hat{c}(e,t)\frac{s}{\parallel s \parallel}       & \quad \text{if } \parallel s \parallel\geq \epsilon,\\
    \alpha\hat{c}(e,t)\frac{s}{\epsilon}        & \quad \text{if } \parallel s \parallel< \epsilon,\\
  \end{cases}
\end{equation}
where, $K_1$ and $K_2$ are two positive definite matrices with appropriate dimensions, $s=B^TPe$, $e=\begin{bmatrix}
e_1^T & \dot{e}_1^T
\end{bmatrix}^T$, $\hat{c}$ is the switching gain responsible to tackle the uncertainties, $ \alpha>0$ is a scalar adaptive gain and $\epsilon>0$ represents a small scalar. To accomplish the second objective, the following novel adaptive control law for evaluation of $\hat{c}$ is proposed which eliminates the requirement of any predefined bound of uncertainties:


\begin{equation}\label{ATRC}
 \dot{\hat{c}} =
  \begin{cases}
   \quad  \parallel s \parallel     & \quad \hat{c}>\gamma, s^T\dot{s} >0\\
    -\parallel s \parallel       & \quad \hat{c}>\gamma,s^T\dot{s} \leq 0 \\
    \qquad  \gamma        & \quad \hat{c}\leq\gamma,\\
  \end{cases}
\end{equation}

where, $\gamma>0$ is a small scalar to keep $\hat{c}$ always positive. According to the adaptive law (\ref{ATRC}), $\hat{c}$ increases (resp. decreases) whenever the error trajectories move away from (resp. move closer to) $ s =0 $. Nevertheless, due to the delay $h$ the error dynamics of (\ref{sys new}) employing (\ref{input}) and (\ref{control input}) is found to be,


\begin{equation}\label{error dyn delayed 1}
\ddot{e}_1=-K_2\dot{e}_{1h}-K_1e_{1h}+\sigma-\Delta u_h,
\end{equation}
where, $\sigma=(I-M^{-1}(q)\hat{M}(q_h))u_h+M^{-1}(q)(N(q,\dot{q})-\hat{N}(q_h,\dot{q}_h))+\ddot{q}^d(t)-\ddot{q}^d_h$. Further, (\ref{error dyn delayed 1}) can be formulated in state space as,
\begin{equation}\label{new err dyn 1}
\dot{e}=A_1e+B_1e_h+B(-\Delta u_h+\sigma),
\end{equation}
where, $ A_1=\begin{bmatrix}
0 & I \\
0 & 0
\end{bmatrix} 
, 
B_1=\begin{bmatrix}
0 & 0\\
-K_1 & -K_2
\end{bmatrix}$. Noting that, $e_h=e(t)-\int\limits_{-h}^0 \dot{e}(t+\theta)d\theta$, where the derivative inside the integral is with respect to $\theta$, the error dynamics (\ref{new err dyn 1}) is modified as,
\begin{equation}\label{error dyn delayed}
\dot{e}(t)=Ae(t)-B_1\int\limits_{-h}^0 \dot{e}(t+\theta)d\theta+B(-\Delta u_h+\sigma),
\end{equation}
where, $A=A_1+B_1$. The controller gains $K_1$ and $K_2$ are selected in a way such that $A$ is Hurwitz which is always possible by noting the structure of $A$. The stability analysis of the proposed controller is carried out in the sense of UUB and is detailed subsequently.

\subsubsection{Stability Analysis of AROLC}
\begin{assum}\label{assum 1}
All the states i.e $q, \dot{q}$ are available.
\end{assum}
\begin{assum}\label{assum 2}
The uncertainties are bounded as $|| \sigma || \leq c$, where $c$ is an unknown scalar quantity. Knowledge of $c$ is not required to compute the proposed control law. However, consideration of $c$ is necessary for stability analysis. 
\end{assum}
\begin{assum}\label{assum 5}
Let, V(e) be a Lyapunov function candidate. Then  following the Razumikhin-type theorem \cite{Ref:37}, for some constant $r>1$, let the following inequality holds:
\begin{equation}\label{razu}
V(e(\xi ))<rV(e(t)),\qquad t-2h\leq \xi \leq t.
\end{equation}
\end{assum}
Consider a Lyapunov function of the following form:
\begin{equation}\label{lyapunov}
V(e)=V_1(e)+ V_2(e),
\end{equation}
where, $V_1(e)=\frac{1}{2}e^TPe$, $V_2(e)=\frac{1}{2}(\hat{c}-c)^2$,  $P>0$ is the solution of the Lyapunov equation $A^TP+PA=-Q$ for some $Q>0$.

\begin{theorem}
The maximum allowable delay time for system (\ref{sys new}) employing the control input (\ref{input})-(\ref{delta u}) is found to be,
\begin{equation}\label{delay value}
h< \frac{\lambda _{min}(Q)}{||\beta PB_1( A_1P^{-1}A_1^T+B_1P^{-1}B_1^{T}+P^{-1} )B_1^{T}P+2(r/\beta )P)||},
\end{equation}
provided gains $K_1$, $K_2$ and scalar design parameter $r>1, \beta>0$ are selected in a manner which satisfies 
\begin{equation}
\lambda _{min}(Q)>h||\beta PB_1( A_1P^{-1}A_1^T+B_1P^{-1}B_1^{T}+P^{-1} )B_1^{T}P+2(r/\beta )P)|| \quad \forall h.
\end{equation}
\begin{proof}:
Using (\ref{error dyn delayed}), the time derivative  of $V_1(e)$ yields,
\begin{equation}\label{lya_dot for time delay}
\dot{V}_1=-\frac{1}{2}e^TQe-e^{T}PB_1\int_{-h}^{0}\dot{e}(t+\theta)d\theta+s^T(-\Delta u_h+\sigma)
\end{equation}
Again using (\ref{new err dyn 1}),
\begin{align}
-e^{T}PB_1\int_{-h}^{0}\dot{e}(t+\theta)d\theta=- \int_{-h}^{0}e^{T}PB_1 & [ A_1e(t+\theta ) +B_1e(t-h+\theta )\nonumber\\
& +B\sigma _2(t+\theta ) ]d\theta, \label{relation}
\end{align}
where, $\sigma_2(t)=-\Delta u_h+\sigma(t)$. Applying (\ref{razu}) to (\ref{lyapunov}), the following relation is obtained,
\begin{equation}
e^{T}(\xi )Pe(\xi )<re^{T}(t)Pe(t)+\varphi(\xi),
\end{equation}  
where, $\varphi(\xi)=r(\hat{c}(t)-c)^{2}-(\hat{c}(\xi )-c)^{2}$. For any two non zero vectors $z_1$ and $z_2$, there exists a constant $\beta>0$ and matrix $D>0$ such that the following inequality holds,
\begin{equation}\label{ineq 1}
-2z_1^{T} z_2\leq \beta z_1^{T}D^{-1}z_1+(1/\beta ) z_2^{T}D z_2.
\end{equation}
Applying (\ref{ineq 1}) to (\ref{relation}) and taking $D=P$ the following inequalities are obtained,
\begin{align}\label{cond1}
- 2\int_{-h}^{0}e^{T}PB_1 A_1\left [ e(t+\theta)\right ]d\theta &\leq \int_{-h}^{0}[ \beta e^{T}PB_1A_1P^{-1}A_1^{T}B_1^{T}Pe\nonumber\\
&\qquad \qquad \qquad +(1/\beta )e^{T}(t+\theta )Pe(t+\theta ) ]d\theta \nonumber\\
&\leq h e^{T}\left [ \beta PB_1A_1P^{-1}A_1^{T}B_1^{T}P+(r/\beta )P) \right ]e\nonumber\\
&\qquad \qquad \qquad +\int_{-h}^{0}(1/\beta)\varphi(t+\theta) d\theta 
\end{align}
\begin{align}\label{cond2}
- 2\int_{-h}^{0}e^{T}PB_1B_1 \left [  e(t-h+\theta)\right ]d\theta &\leq \int_{-h}^{0} [ \beta e^{T}PB_1B_1P^{-1}B_1^{T}B_1^{T}Pe\nonumber\\
& \qquad \quad+\frac{1}{\beta} e^{T}(t-h+\theta )Pe(t-h+\theta ) ]d\theta \nonumber \\
& \leq h e^{T}\left [ \beta PB_1B_1P^{-1}B_1^{T}B_1^{T}P+\frac{r}{\beta}P) \right ]e\nonumber \\
& \qquad \qquad +\int_{-h}^{0}(1/\beta)\varphi(t-h+\theta ) d\theta 
\end{align}
\begin{align}\label{cond3}
- 2\int_{-h}^{0}e^{T}PB_1 \left [ B \sigma _2(t+\theta)\right ]d\theta &\leq \int_{-h}^{0}[ \beta e^{T}PB_1P^{-1}B_1^{T}Pe\nonumber\\
& \qquad \qquad +\frac{1}{\beta}(B \sigma _2(t+\theta))^{T}PB \sigma _2(t+\theta)   ]d\theta \nonumber \\
& \leq h e^{T}\left [ \beta PB_1P^{-1}B_1^{T}P \right ]e\nonumber \\
& +\int_{-h}^{0}\frac{1}{\beta} (B \sigma _2(t+\theta))^{T}PB \sigma _2(t+\theta)  d\theta 
\end{align}
Since $P>0$, we can write $P=\bar{P}^T\bar{P}$ for some $\bar{P}>0$. Then, assuming the uncertainties to be square integrable within the delay, there exists a scalar $\Gamma>0$ such that, the following inequality holds over the stipulated time delay:
\begin{align}\label{cond4}
\frac{1}{2\beta }\left \| \int_{-h}^{0}\left [ \varphi(t+\theta)+\varphi(t-h+\theta )+(B \sigma _2(t+\theta))^{T}\bar{P}^T\bar{P}B \sigma _2(t+\theta) \right ]d\theta  \right \|\leq \Gamma. 
\end{align} 
Substituting (\ref{cond1})-(\ref{cond4}) into (\ref{lya_dot for time delay}) yields,
\begin{equation}\label{delay cal}
\dot{V}_1(e) \leq -\frac{1}{2}e^{T}\left [ Q-hE  \right ]e+ \Gamma+ s^{T}(-\Delta u+\sigma)+s^{T}\Theta, 
\end{equation}
where, $E =\left ( \beta PB_1\left ( A_1P^{-1}A_1^T+B_1P^{-1}B_1^{T}+P^{-1} \right )B_1^{T}P+2(r/\beta )P \right )$, $\Theta=\Delta u(t)-\Delta u_h$.
To achieve negativeness of the first term of (\ref{delay cal}), the controller gains $K_1,K_2$ and the scalar variables $r, \beta$ are to be selected in a manner such that $\lambda _{min}(Q)>h|| E || \quad \forall h$. Hence, the maximum allowable delay can be found from (\ref{delay cal}) as,
\begin{equation}\label{delay value 1}
h< \frac{\lambda _{min}(Q)}{|| E ||}.
\end{equation}
Since $E$ is governed by the controller gains, (\ref{delay value}) provides the maximum input delay that system (\ref{sys new}) can sustain for some choice of $K_1,K_2,Q,r$ and $\beta$.
\end{proof}
\end{theorem}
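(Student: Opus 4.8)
The plan is to run a Razumikhin-type Lyapunov argument on the composite function $V=V_1+V_2$ from (\ref{lyapunov}) and to read the delay ceiling off the requirement that the quadratic part of $\dot V_1$ stay negative definite. First I would differentiate $V_1=\tfrac{1}{2}e^TPe$ along the reformulated error dynamics (\ref{error dyn delayed}). Since $A^TP+PA=-Q$, the leading contribution collapses to $-\tfrac{1}{2}e^TQe$, and because $s=B^TPe$ the forcing contributes $s^T(-\Delta u_h+\sigma)$; what remains is the single delay-carrying cross term $-e^TPB_1\int_{-h}^{0}\dot e(t+\theta)\,d\theta$. Everything afterward is aimed at dominating that integral by $h$ times a fixed quadratic form in the \emph{current} state $e(t)$, because this $h$-factor is precisely what will compete against $\lambda_{min}(Q)$.

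The second step is to substitute the unreformulated dynamics (\ref{new err dyn 1}) for $\dot e(t+\theta)$ inside the integral, which splits it into three bilinear pieces driven respectively by $A_1e(t+\theta)$, $B_1e(t-h+\theta)$, and $B\sigma_2(t+\theta)$ with $\sigma_2=-\Delta u_h+\sigma$. Each is of the form $-2z_1^Tz_2$, so I would apply the Young-type (completion-of-squares) inequality with $D=P$ to dominate it by $\beta z_1^TP^{-1}z_1+(1/\beta)z_2^TPz_2$; the first summand is already a quadratic in $e(t)$, contributing the $\beta PB_1(\cdot)B_1^TP$ blocks, while the second is a quadratic in a time-shifted state.

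The crux of the argument, and where Assumption \ref{assum 5} does the real work, is turning those shifted quadratics back into quadratics in $e(t)$. Applying the Razumikhin inequality (\ref{razu}) to $V$ yields $e^T(\xi)Pe(\xi)<re^T(t)Pe(t)+\varphi(\xi)$ for every $\xi\in[t-2h,t]$; the width $2h$ is exactly what the doubly-shifted argument $t-h+\theta$ demands, since for $\theta\in[-h,0]$ it reaches back to $t-2h$. This substitution is what produces the $(r/\beta)P$ terms, and integrating over $\theta\in[-h,0]$ converts each bound into $h$ times a constant matrix; summing the three contributions assembles precisely $E=\beta PB_1(A_1P^{-1}A_1^T+B_1P^{-1}B_1^T+P^{-1})B_1^TP+2(r/\beta)P$. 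The leftover $\varphi$-terms and the square-integrable uncertainty part are not quadratic in $e(t)$, so I would collect them into a single bounded constant $\Gamma$ rather than attempt to cancel them.

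Collecting the estimates gives $\dot V_1\le-\tfrac{1}{2}e^T[Q-hE]e+\Gamma+s^T(-\Delta u+\sigma)+s^T\Theta$, where the switching term has merely been rewritten using $\Theta=\Delta u-\Delta u_h$. Negative definiteness of the quadratic part demands $Q-hE>0$; since $E>0$, the cleanest sufficient condition is $\lambda_{min}(Q)>h\,\|E\|$, which rearranges immediately to the claimed $h<\lambda_{min}(Q)/\|E\|$. I would close by observing that $E$ is built only from the design data $K_1,K_2,Q,r,\beta$, so this is a genuinely a priori computable ceiling, and that the residual terms $\Gamma+s^T(-\Delta u+\sigma)+s^T\Theta$ do not enter the delay bound at all; they are what the adaptive switching law (\ref{ATRC}) must subsequently dominate to complete the UUB argument.
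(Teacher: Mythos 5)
Your proposal is correct and follows essentially the same route as the paper: differentiate $V_1$ using the Lyapunov equation, expand the delay integral via (\ref{new err dyn 1}), apply the completion-of-squares inequality with $D=P$ and the Razumikhin condition (\ref{razu}) to each of the three bilinear pieces, absorb the $\varphi$- and $\sigma_2$-residuals into $\Gamma$, and read off $h<\lambda_{min}(Q)/\|E\|$ from positivity of $Q-hE$. Your observation that the $2h$ window in (\ref{razu}) is exactly what the doubly-shifted argument $t-h+\theta$ requires is a point the paper uses implicitly but does not spell out.
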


Exploring the various possible combinations of $\Delta u$ and $\hat{c}$ in (\ref{delta u}) and (\ref{ATRC}) respectively, six different cases have been identified. Stability analysis of (\ref{sys new}) employing AROLC is stated in Theorem 1. In each case, UUB is established using a common Lyapunov function.

\begin{theorem}
The system (\ref{sys new}) employing the control input (\ref{input}), (\ref{control input}), having nominal control input (\ref{aux}) and switching control input (\ref{delta u}) with adaptation law (\ref{ATRC}) is UUB.
\end{theorem}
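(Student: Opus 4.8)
The plan is to establish UUB through the composite Lyapunov function $V=V_1+V_2$ of (\ref{lyapunov}), building directly on the bound for $\dot V_1$ already obtained in (\ref{delay cal}). First I would form the total derivative $\dot V=\dot V_1+\dot V_2$, where $\dot V_2=(\hat c-c)\dot{\hat c}$, and substitute (\ref{delay cal}) to get
\[
\dot V \leq -\tfrac{1}{2}e^{T}(Q-hE)e + \Gamma + s^{T}(-\Delta u+\sigma) + s^{T}\Theta + (\hat c-c)\dot{\hat c}.
\]
The delay condition $h<\lambda_{min}(Q)/\|E\|$ established in the preceding theorem guarantees $Q-hE>0$, so the quadratic term contributes $-\tfrac{1}{2}\lambda_{min}(Q-hE)\|e\|^{2}$; this is the stabilizing core, and the remaining task is to show that every other contribution is either negative or uniformly bounded so that $\dot V$ becomes negative outside a compact set.

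Next I would invoke Assumption \ref{assum 2} to bound the disturbance interaction as $s^{T}\sigma\leq\|s\|\,\|\sigma\|\leq c\|s\|$, and then expand $s^{T}(-\Delta u)$ together with $(\hat c-c)\dot{\hat c}$ according to the six case combinations arising from the two branches of $\Delta u$ in (\ref{delta u}) and the three branches of the adaptation law (\ref{ATRC}). Outside the boundary layer ($\|s\|\geq\epsilon$) one has $s^{T}(-\Delta u)=-\alpha\hat c\|s\|$, and the crucial cancellation is expected to occur when this is combined with $\dot V_2$: in the branch $\hat c>\gamma,\ s^{T}\dot s>0$ the grouped terms $s^{T}(-\Delta u+\sigma)+(\hat c-c)\dot{\hat c}$ collapse to $(1-\alpha)\hat c\|s\|$, which is rendered negative by taking the adaptive gain $\alpha>1$ (recall $\hat c>0$). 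In the other branches the combined $\hat c$-dependent terms are shown to be negative once $\hat c$ exceeds a fixed multiple of $c$, or else to be bounded by a constant depending only on $\gamma$; inside the boundary layer ($\|s\|<\epsilon$) the switching term $\alpha\hat c\,s/\epsilon$ cannot fully cancel the uncertainty, leaving a residual of order $c\epsilon$.

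Collecting the six cases, each yields an estimate of the form $\dot V\leq-\tfrac{1}{2}\lambda_{min}(Q-hE)\|e\|^{2}+\mu$ for a single uniform constant $\mu$ aggregating $\Gamma$, the boundary-layer residual, and the bounded adaptation contributions. This forces $\dot V<0$ whenever $\|e\|>\sqrt{2\mu/\lambda_{min}(Q-hE)}$, which is precisely the UUB conclusion for $e$, and the structure of (\ref{ATRC}) keeps $\hat c$ bounded. The main obstacle I anticipate is the delayed switching term $s^{T}\Theta=s^{T}(\Delta u-\Delta u_h)$: because it compares the current and delayed values of a discontinuous control, it must be shown bounded using $\|\Delta u\|\leq\alpha\hat c$ together with the boundedness of $\hat c$ that the argument is simultaneously trying to establish, so some care is needed to avoid circularity — most cleanly by first arguing from (\ref{ATRC}) that $\hat c$ cannot escape to infinity and only then absorbing $\Theta$ into $\mu$. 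The boundary-layer treatment, which is what yields ultimate boundedness rather than asymptotic convergence, is the secondary technical point.
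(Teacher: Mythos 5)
Your proposal follows essentially the same route as the paper's proof: the same composite Lyapunov function $V=V_1+V_2$, the same six-way case split over the branches of (\ref{delta u}) and (\ref{ATRC}), the same key cancellation $s^{T}(-\Delta u+\sigma)+(\hat c-c)\dot{\hat c}\leq(1-\alpha)\hat c\|s\|$ requiring $\alpha>1$, and the same boundary-layer residual yielding UUB rather than asymptotic convergence. The only cosmetic difference is that the paper derives a separate ultimate bound $\varpi_i$ for each case (using $\|s\|\leq\|B^{T}P\|\,\|e\|$) rather than a single aggregated constant $\mu$, and it simply treats $\|\Theta\|$ as bounded without addressing the circularity you correctly flag.
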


\begin{proof}: Taking $\Psi=Q-hE>0$ and utilizing result of Theorem 1, (\ref{control input})-(\ref{ATRC}) and (\ref{lyapunov}), each of the six cases is analysed as follows:
\par \textbf{Case (i):} \quad $\hat{c}>\gamma, \quad || s || \geq \epsilon \quad \text{and} \quad s^T\dot{s}>0$\\
 Utilizing (\ref{ATRC}) and (\ref{delay cal}), the time derivative of (\ref{lyapunov}) yields,
\begin{align}
\dot{V}(e)&\leq-\frac{1}{2}e^{T}\Psi e+ \Gamma+ s^{T}(-\Delta u+\sigma)+s^{T}\Theta+(\hat{c}-c)||s||\nonumber\\
& = -\frac{1}{2}e^{T}\Psi e+ \Gamma+ s^{T}(-\alpha\hat{c}\frac{s}{||s||}+\sigma)+s^{T}\Theta+(\hat{c}-c)||s||.\label{proof eq}
\end{align}
Again, 
\begin{equation}\label{case 1}
s^T(-\alpha\hat{c}\frac{s}{|| s ||}+\sigma)=-\alpha\hat{c}\frac{s^Ts}{|| s ||}+s^T\sigma \leq (-\alpha\hat{c}+c)|| s ||
\end{equation}
Using (\ref{proof eq}) and (\ref{case 1}) we have,
\begin{align*}
\dot{V}(e) &\leq  -\frac{1}{2}\lambda_{min}(\Psi)|| e|| ^2-(\alpha-1)\hat{c}|| s ||+||\Theta|||| s ||+\Gamma
\end{align*}
So, for a choice of $\alpha>1$, $\dot{V}(e)<0$ would be established if $\lambda _{min}(\Psi)|| e ||^{2} > 2 \Gamma +2|| \Theta || || s ||$. Again, we have $|| s || \leq ||B^TP|| ||e||$. Thus (\ref{sys new}) would be UUB with the ultimate bound,
\begin{equation}\label{error bound 1}
||e||=\mu_1+\sqrt{\frac{2\Gamma}{\lambda _{min}(\Psi)}+\mu_1^2}=\varpi_1.
\end{equation}
where, $\mu_1=\frac{|| \Theta ||||B^TP||}{\lambda _{min}(\Psi)}$.
Let $\Xi$ denote the smallest level surface of $V$ containing the ball $B_{\varpi_1}$ with radius $\varpi_1$ centred at $e=0$. For initial time $t_0$, if $e(t_0)\in \Xi$ then the solution remains in $\Xi$. If  $e(t_0)\notin \Xi$ then $V$ decreases as long as $e(t)\notin \Xi$. The time required to reach $\varpi_1$ is zero when $e(t_0)\in \Xi$, otherwise, while $e(t_0)\notin \Xi$ the finite time $t_{r1}$ to reach $\varpi_1$ is given by \cite{Ref:36},
\begin{equation}\label{time}
t_{r1}-t_0\leq (\parallel e(t_0) \parallel-\varpi_1)/c_0, 
\end{equation}
where, $\dot{V}(t)\leq-c_0$ for some $c_0>0$.
\par \textbf{Case (ii):} \quad $\hat{c}>\gamma, \quad || s || \geq \epsilon \quad \text{and} \quad s^T\dot{s}<0$ \\
Exploring (\ref{delay cal}), (\ref{ATRC}) and (\ref{case 1})  the time derivative of (\ref{lyapunov}) yields,

\begin{align*}
\dot{V}(e) &\leq -\frac{1}{2}e^T \Psi e+(c-\alpha\hat{c})|| s ||+s^{T}\Theta+\Gamma-(\hat{c}-c)||s||\\
&\leq -\frac{1}{2}\lambda_{min}(\Psi)|| e|| ^2+(2c-(\alpha+1)\hat{c}+||\Theta||)|| s ||+\Gamma
\end{align*}

So, $\dot{V}(e)<0$ would be achieved if $\lambda _{min}(\Psi)|| e ||^{2} > 2(2c-(\alpha+1)\hat{c}+||\Theta||)|| s ||+2\Gamma$. Thus (\ref{sys new}) would be UUB with the ultimate bound,
\begin{equation}\label{error bound 1}
||e||=\mu_2+\sqrt{\frac{2\Gamma}{\lambda _{min}(\Psi)}+\mu_2^2}=\varpi_2.
\end{equation}
where, $\mu_2=\frac{(2c-(\alpha+1)\hat{c}+||\Theta||)||B^TP||}{\lambda _{min}(\Psi)}$.

\par \textbf{Case (iii):} \quad $\hat{c} \leq \gamma, \quad || s || \geq \epsilon  \quad \text{and any} \quad s^T\dot{s} $\\
Since $\hat{c}\leq \gamma$ we have $(\hat{c}-c )\gamma \leq \gamma^2-c\gamma \leq \gamma^2$. So, using (\ref{ATRC}) and following similar procedure we have,
\begin{align*}
\dot{V}(e)& \leq -\frac{1}{2}e^{T}\Psi e+ \Gamma+ s^{T}(-\Delta u+\sigma)+s^{T}\Theta+(\hat{c}-c )\gamma\nonumber\\
& \leq -\frac{1}{2}e^{T}\Psi e+ \Gamma+ s^{T}(-\alpha\hat{c}\frac{s}{||s||}+\sigma)+|| s ||||\Theta||+\gamma^2.
\end{align*}
Using (\ref{case 1}) we have,
\begin{equation*}
\dot{V}(e) \leq -\frac{1}{2}\lambda_{min}(\Psi)|| e|| ^2+(c-\alpha\hat{c}+||\Theta||)|| s ||+\Gamma+\gamma^2
\end{equation*}
So, system would be UUB with the following ultimate bound,
\begin{equation}
||e||=\mu_3+\sqrt{\frac{2(\Gamma+\gamma^2)}{\lambda _{min}(\Psi)}+\mu_3^2}=\varpi_3.
\end{equation}
where, $\mu_3=\frac{( c-\alpha\hat{c}+|| \Theta  || )}{\lambda _{min}(\Psi)}$.

\par \textbf{Case (iv):} \quad $\hat{c} > \gamma, \quad || s || < \epsilon \quad \text{and} \quad s^T\dot{s}>0$\\
The term $s^T\sigma$ turns out to be:
\begin{equation}\label{for case iv}
s^T\sigma \leq \Vert s \Vert \Vert\sigma \Vert \leq c\Vert s \Vert=c\frac{s^Ts}{\Vert s \Vert}
\end{equation}
Using  (\ref{ATRC}), (\ref{delay cal}) and (\ref{for case iv}), the time derivative of (\ref{lyapunov}) gives,

\begin{align}
\dot{V}(e)&\leq-\frac{1}{2}e^{T}\Psi e+ \Gamma+ s^{T}(-\Delta u+\sigma)+s^{T}\Theta+(\hat{c}-c)||s||\nonumber\\
& \leq -\frac{1}{2}\lambda_{min}(\Psi)|| e|| ^2+ \Gamma+ s^{T}( -\alpha\hat{c}\frac{s}{\epsilon} +c\frac{s}{|| s ||} )+||s||||\Theta||+(\hat{c}-c)||s||.\label{case iv cond 1}
\end{align}
The combination of third, fourth and fifth term of (\ref{case iv cond 1}) takes the maximum value of $(\epsilon(\hat{c}+||\Theta||)^2 )/4\alpha\hat{c}$ for $|| s ||=(\epsilon(\hat{c}+||\Theta||))/2\alpha\hat{c}$. Thus $\dot{V}(e)<0$ would be achieved if $\lambda _{min}(\Psi)|| e ||^{2} > 2\Gamma+(\epsilon(\hat{c}+||\Theta||)^2 )/2\alpha\hat{c}$. So, the system is UUB with the ultimate bound is calculated to be,
\begin{equation}
|| e ||=\sqrt{\frac{4\alpha\Gamma\hat{c}+\epsilon (\hat{c}+||\Theta|| )^2}{2\alpha \hat{c} \lambda_{min}(\Psi)}}=\varpi_4
\end{equation}

\par \textbf{Case (v):} \quad $\hat{c} > \gamma, \quad || s || < \epsilon \quad \text{and} \quad s^T\dot{s}<0$\\
Using  (\ref{ATRC}), (\ref{delay cal}) and (\ref{for case iv}),
\begin{align}
\dot{V}(e)& \leq-\frac{1}{2}e^{T}\Psi e+ \Gamma+ s^{T}(-\Delta u+\sigma)+s^{T}\Theta-( \hat{c}-c)|| s ||\nonumber\\
& \leq -\frac{1}{2}\lambda_{min}(\Psi)|| e|| ^2+ \Gamma+ s^{T}( -\alpha\hat{c}\frac{s}{\epsilon} +c\frac{s}{|| s ||} )+||s|| ||\Theta||-( \hat{c}-c)|| s ||\label{case v}
\end{align}

The combination of third, fourth and fifth term of (\ref{case v}) takes the maximum value of $(\epsilon(2c-\hat{c}+||\Theta||)^2 )/4\alpha\hat{c}$ for $|| s ||=\left(\epsilon(2c-\hat{c}+||\Theta||) \right)/2\alpha\hat{c}$. Thus $\dot{V}(e)<0$ would be achieved if $\lambda _{min}(\Psi)|| e ||^{2} > 2\Gamma+(\epsilon(2c-\hat{c}+||\Theta||)^2 )/2\alpha\hat{c}$. So, the system is UUB and the ultimate bound is calculated to be,
\begin{equation}
|| e ||=\sqrt{\frac{4\alpha\Gamma\hat{c}+\epsilon (2c-\hat{c}+||\Theta||)^2}{2\alpha \hat{c} \lambda_{min}(\Psi)}}=\varpi_5
\end{equation}


%
%
%

\par\textbf{Case (vi):} \quad $\hat{c} \leq \gamma, \quad || s || < \epsilon  \quad \text{and any} \quad s^T\dot{s} $

Following similar procedure, time derivative of (\ref{lyapunov}) for Case (vi) yields,
\begin{align}
\dot{V}(e)&\leq-\frac{1}{2}e^{T}\Psi e+ \Gamma+ s^{T}(-\Delta u+\sigma)+s^{T}\Theta+\gamma^2\nonumber\\
& \leq -\frac{1}{2}\lambda_{min}(\Psi)|| e|| ^2+ \Gamma+ s^{T}( -\alpha\hat{c}\frac{s}{\epsilon} +c\frac{s}{|| s ||} )+||s||||\Theta||+\gamma^2.\label{case vi cond 1}
\end{align}
The combination of third and fourth term of (\ref{case vi cond 1}) takes the maximum value of $(\epsilon(c+||\Theta||)^2 )/4\alpha\hat{c}$ for $\Vert s \Vert=\left(\epsilon\left(c+||\Theta|| \right) \right)/2\alpha\hat{c}$. Thus $\dot{V}(e)<0$ would be achieved if $\lambda _{min}(\Psi)|| e ||^{2} > 2\Gamma+(\epsilon(c+||\Theta||)^2 )/2\alpha\hat{c}+2\gamma^2$. So, the system is UUB with the ultimate bound is defined as,
\begin{equation}
|| e ||=\sqrt{\frac{4\alpha\hat{c}(\Gamma+\gamma^2)+\epsilon (c+||\Theta|| )^2}{2\alpha \hat{c} \lambda_{min}(\Psi)}}=\varpi_6
\end{equation}
The finite reaching time to the error bounds $\varpi_i, i=2,\cdots,6$ can be computed similarly from (\ref{time}).

\end{proof}

\begin{remark}
The performance of AROLC can be characterized by the various error bounds ($\varpi_i, i=1,\cdots,6$) under various conditions. The scalars $\varpi_i's$ are function of $\alpha$ and delay $h$. It can be seen that as $\alpha$ increases and $h$ drops, better tracking accuracy can be achieved. However, too large a $\alpha$ may result in high control input. Also, one may choose different values of $\alpha$ for $s^T\dot{s} \leq 0$ and $s^T\dot{s}>0$. Again, it is to be noticed that, instead of $s^T\dot{s}$ user can select any suitable error function while keeping the same stability notion. 
\end{remark}
\textit{\textbf{Comparison with the existing results:}}
\begin{itemize}

\item  It can be observed from (\ref{delay value}) that, proposed stability approach is independent of the rate of change in $h$ and thus can negotiate any arbitrarily time-varying delay within the stipulated maximum bound. On the contrary, \cite{Ref:32}-\cite{Ref:33} can negotiate only slowly varying time delay and also bound of $\ddot{h}$ is required. 

\item Computation of switching gain $\hat{c}$ of AROLC using (\ref{ATRC}), unlike \cite{Ref:31}-\cite{Ref:35}, does not require predefined bound of the uncertainties and helps to to attain the tracking objective. Accomplishment of this objective in turn reduces the tedious modelling effort of complex nonlinear systems. To illustrate the fact with an example, $\hat{N}$ does not need to include friction, slip, skid etc. for wheeled mobile robot and these terms can be treated as uncertainties.

\end{itemize}

\section{Application: Nonholonomic WMR}
Nonholonomic WMR provides a unique platform to test the proposed control law since under practical circumstances, a WMR is always subjected to uncertainties like friction, slip, skid etc. These terms are extremely difficult to model and in many cases they are not considered while modelling. The dynamic equation of a WMR after solving the Lagrange multiplier can be written as follows \cite{Ref:37},
\begin{equation}
\bar{M}(q)\ddot{q}+\bar{V}(q,\dot{q})=Gu,
\end{equation}
where,
\begin{align*}
\bar{M}=&\begin{bmatrix}
m & 0 & Ksin\phi & k_1 & k_2 \\ 
0 & m & -Kcos\phi & k_3 & k_4\\ 
Ksin\phi & -Kcos\phi & \bar{I} & -k_5 & k_5 \\ 
k_1 & k_3 & -k_5 & I_w & 0 \\ 
k_2 & k_4 & k_5 & 0 & I_w
\end{bmatrix},u=\begin{bmatrix}
u_r\\ 
u_l
\end{bmatrix}\\
G=&\begin{bmatrix}
0 &0 \\ 
0 &0 \\ 
0 & 0\\ 
1 & 0\\ 
0 & 1
\end{bmatrix},
\bar{V}=\begin{bmatrix}
md\dot{\phi}^2cos\phi+m\bar{r}^2sin\phi(\dot{\theta }_{\bar{r}}^{2}-\dot{\theta }_{l}^{2})/2b\\ 
md\dot{\phi}^2cos\phi-m\bar{r}^2sin\phi(\dot{\theta }_{\bar{r}}^{2}-\dot{\theta }_{l}^{2})/2b\\ 
K\bar{r}^2(\dot{\theta }_{\bar{r}}^{2}-\dot{\theta }_{l}^{2})/2b\\ 
-K\bar{r}\dot{\phi}^2/2\\ 
-K\bar{r}\dot{\phi}^2/2\\ 
\end{bmatrix}\\
k_1=&sin\phi (md\bar{r}-K\bar{r})/b-m\bar{r}cos\phi /2,\\
k_2=&sin\phi (K\bar{r}-md\bar{r})/b-m\bar{r}cos\phi /2,\\
k_3=&cos\phi (K\bar{r}-md\bar{r})/b-m\bar{r}sin\phi /2,\\
k_4=&cos\phi (md\bar{r}-K\bar{r})/b-m\bar{r}sin\phi /2,
k_5=\bar{r}(I-Kd)/b,
\end{align*}
Here $q\in\Re^5=\lbrace x_c,y_c, \phi, \theta_r, \theta_l \rbrace$ is the generalized coordinate vector of the system. The position of the WMR can be specified by three generalized coordinates $(x_c,y_c, \phi)$ where, $(x_c,y_c)$ are the coordinates of the center of mass of the system and $\phi$ is the heading angle. $(\theta_r, \theta_l)$ and $(u_r,u_l)$  are rotation and torque inputs of the right and left wheels respectively. $m$ and $\bar{I}$ represents the mass and inertia of the overall system. Definition of other system parameters are detailed in \cite{Ref:38}. 

\subsection{Experimental Results and Comparison}

\begin{figure*}[t]
\begin{center}
\includegraphics[width=4.5 in,height=2.1 in]{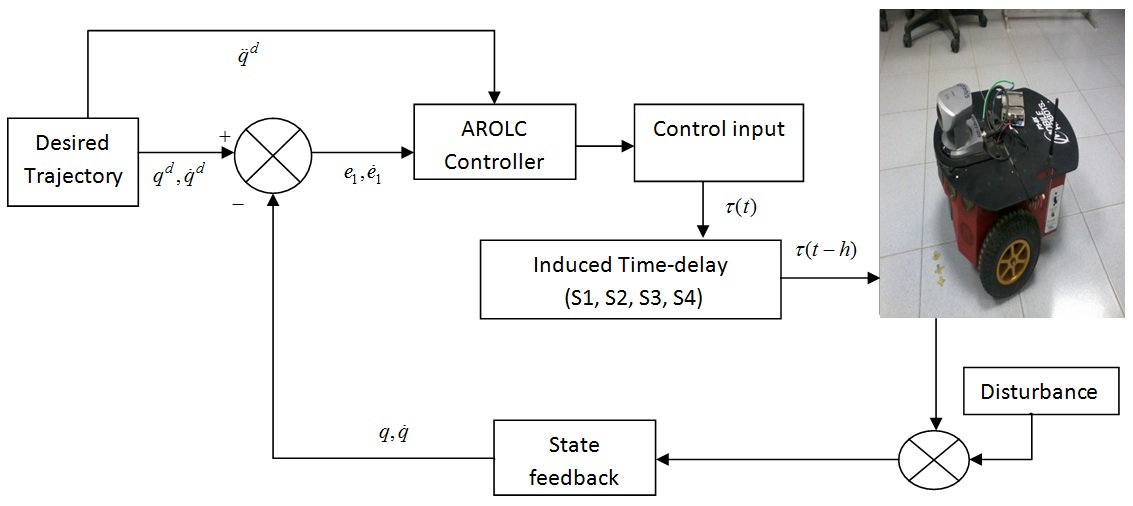}
\caption{Control architecture for TARC while employing on PIONEER-3.}\label{fig:block}
\end{center}
\end{figure*}

AROLC is employed in "PIONEER-3" WMR while the robot is directed to track the following circular path:
\begin{align*}
x_c^d&=1.25sin(0.35t)+.1,\\
y_c^d&=1.25cos(0.35t)+1.35,\\
\phi^d&=0.25t, \theta_r^d=3t, \theta_l^d=2t.
\end{align*}

 The control architecture of the proposed AROLC is depicted in Fig. \ref{fig:block}. Tracking performance of AROLC is compared with the Predictive Controller (PCON)reported in \cite{Ref:32}. For a choice of $K_1=K_2=Q=I, \beta=1, r=1.1$, the maximum allowable delay is found to be $h_m=125 ms$. Other parameters to design AROLC are defined as $\alpha=2, \epsilon=0.1, \gamma=0.001$. The following time-varying input delay was induced into the system for both the controllers: 
\par \textbf{S1}: $h(t)=20+80$ $abs(sin(t))$ ms
\par \textbf{S2}: $h(t)=5+120$ $abs(sin(0.1t))$ ms.
 

The input delay is envisaged by halting the programme for $h(t)$ amount of time in the VC++ programming environment between the control input computation and feeding it to the system. Again, to create a dynamic payload variation, a further $3.5 kg$ payload is added and kept for $5 sec$ and then removed. This process is carried out for the entire duration of experimentation. A time gap $5 sec$ is maintained between two successive instances of addition of the payload. However, the payload was added randomly at different places on the robotic platform every time to create dynamic variation in center of mass and inertia. 
\par The trajectory tracking performance between AROLC and PCON is depicted in Fig. \ref{fig:1} for the condition \textbf{S1}. The corresponding $x_c$ and $y_c$ position error comparison is illustrated through Fig. \ref{fig:2} and Fig. \ref{fig:3} respectively. All the error plots in this paper are in absolute value. In the situation \textbf{S2}, Fig. \ref{fig:5} and Fig. \ref{fig:6} depict the $x_c$ and $y_c$ position error comparison respectively. Due to the robustness property against the unmodelled dynamics, AROLC provides better accuracy over PCON through its switching control logic. This can easily be comprehended from these error plots. To infer the performance of the individual controllers, absolute average error in $x_c$ (AE-$x_c$) and $y_c$ position (AE-$y_c$) is provided in Table \ref{table 1} for \textbf{S1} and \textbf{S2}. The percentage error is calculated with respect to the diameter of the circular path. The tabulated data further establishes the superior performance of AROLC over PCON.


\begin{figure}[t]
\begin{center}
\includegraphics[width=3 in,height=2.7 in]{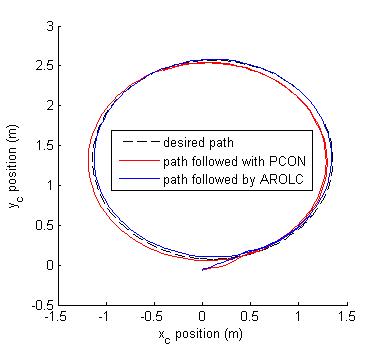}
\caption{Trajectory tracking performance comparison for input delay \textbf{S1}.}\label{fig:1}
\end{center}
\end{figure}

\begin{figure}[t]
\begin{center}
\includegraphics[width=3 in,height=2.1 in]{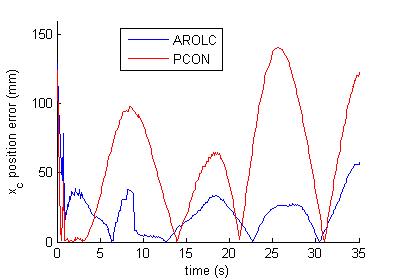}
\caption{ $x_c$ position error comparison for input delay \textbf{S1}.}\label{fig:2}
\end{center}
\end{figure}

\begin{figure}[t]
\begin{center}
\includegraphics[width=3 in,height=2.1 in]{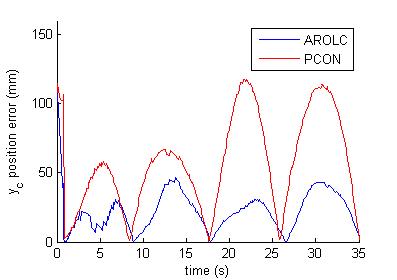}
\caption{$y_c$ position error comparison for input delay \textbf{S1}.}\label{fig:3}
\end{center}
\end{figure}

\begin{table}[!t]
\renewcommand{\arraystretch}{1.3}
\caption{$x_c$ and $y_c$ Position Error (mm) Comparison for \textbf{S1} and \textbf{S2}}
\label{table 1}
\centering
\begin{tabular}{|c||c||c||c||c||c|}
\hline
Delay (ms)& Controller & AE-$x_c$ & $\%$ AE-$x_c$. & AE-$y_c$ & $\%$ AE-$y_c$ \\
\hline
\multirow{2}{*}{\textbf{S1}}& PCON (\cite{Ref:32}) & 58.30 & 2.33 & 53.09 & 2.12\\ \cline{2-6}

& AROLC (proposed) & 23.33 & 0.93 & 22.92 & 0.92\\
\hline
\multirow{2}{*}{\textbf{S2}} & PCON (\cite{Ref:32}) & 78.46 & 3.14 & 81.37 & 3.25\\ \cline{2-6}

 & AROLC (proposed) & 36.42 & 1.46 & 39.66 & 1.59\\
\hline
\end{tabular}
\end{table}



\begin{figure}[t]
\begin{center}
\includegraphics[width=3 in,height=2.1 in]{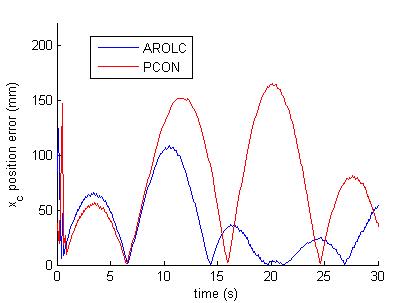}
\caption{$x_c$ position error comparison for input delay \textbf{S2}.}\label{fig:5}
\end{center}
\end{figure}

\begin{figure}[t]
\begin{center}
\includegraphics[width=3 in,height=2.1 in]{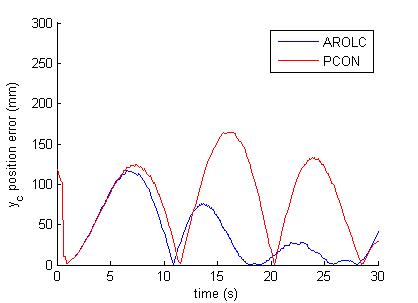}
\caption{$y_c$ position error comparison for input delay \textbf{S2}.}\label{fig:6}
\end{center}
\end{figure}

\par The performance of AROLC is also tested for the following two cases when the input delay is fixed and compared with the predictive controller reported in \cite{Ref:31} (denoted here as PCONf):
\par \textbf{S3}: $h(t)=60$ ms
\par \textbf{S4}: $h(t)=120$ ms
\par The comparative performance of AROLC and PCONf is provided in Table \ref{table 5} for \textbf{S3} and \textbf{S4}. Superior performance of the proposed controller over PCONf is clearly evident from the tabulated data. However, tracking accuracy of AROLC degrades as $h$ increases and this is commensurate with the fact that the error bands for AROLC increases with high input delay. The total variation (TV), a measure of smoothness of input, is denoted as \cite{Ref:39},
\begin{equation}
\text{TV}=\sum_{i=1}^{n-1}|u_r(i+1)-u_r(i))|+|u_l(i+1)-u_l(i))|
\end{equation}
where, $n$ is the length of the samples in $u_r$ and $u_l$ accumulated during experimentation. High value of TV denotes excessive usage of control input (\cite{Ref:39}). TV for the three controllers under various cases are provided in Table \ref{table 9}. It can be noticed that fixed time delay resulted in more control input requirement than the time varying delay for all the controllers. However, AROLC consumed the least control input for all the cases, which further augment its superior performance compared to PCON and PCONf. Some boxes in Table \ref{table 9} are left blank since those particular controllers were not used for the corresponding conditions. 


\begin{table}[!t]
\renewcommand{\arraystretch}{1.3}
\caption{$x_c$ and $y_c$ Position Error (mm) Comparison for \textbf{S3} and \textbf{S4}}
\label{table 5}
\centering
\begin{tabular}{|c||c||c||c||c||c|}
\hline
Delay (ms)& Controller & AE-$x_c$ & $\%$ AE-$x_c$ & AE-$y_c$ & $\%$ AE-$y_c$ \\
\hline
\multirow{2}{*}{\textbf{S3}}& PCONf (\cite{Ref:31}) & 99.63 & 3.99 & 85.79 & 3.43\\ \cline{2-6}

& AROLC (proposed) & 45.66 & 1.82 & 51.84 & 2.07\\
\hline
\multirow{2}{*}{\textbf{S4}}& PCONf (\cite{Ref:31}) & 136.91 & 5.48 & 102.19& 4.79\\ \cline{2-6}

&AROLC (proposed) & 82.29 & 3.29 & 69.05 & 2.76\\
\hline
\end{tabular}
\end{table}

\begin{table}[!t]
\renewcommand{\arraystretch}{1.3}
\caption{TV of Controllers for \textbf{S1}, \textbf{S2}, \textbf{S3} \textbf{S4}}
\label{table 9}
\centering
\begin{tabular}{|c||c||c||c||c|}
\hline
Controller & \textbf{S1} & \textbf{S2}. & \textbf{S3} & \textbf{S4} \\
\hline
PCONf (\cite{Ref:31}) & - & - & 8.74 & 10.73\\
\hline
PCON  (\cite{Ref:32}) & 5.45 & 6.39 & - & -\\
\hline
AROLC (proposed) & 0.42 & 0.55 & 2.23 & 3.36\\
\hline
\end{tabular}
\end{table}

\section{Conclusion}
In this paper, the tracking problem for a class of uncertain nonlinear systems in presence of unknown input delay is solved. The proposed delay dependent control law does not require any knowledge of the input delay to compute the control law and it is also insensitive towards the variation of the delay. The maximum sustainable delay is determined by utilizing the Razumikhin approach. AROLC, through its novel adaptive law, is further able to provide robustness against the parametric and unmodelled uncertainties without any prior knowledge of their bounds. Experimental results using a nonholonomic wheeled mobile robot, validates the superiority of the proposed controller compared to the predictor based control approach.


%

%
%
%


\bibliographystyle{spbasic}      

\begin{thebibliography}{1}

\bibitem{Ref:1}
R. Sipahi, S. I. Niculescu, C. Abdallah, W. Michiels, and K. Gu, Stability and stabilization of systems with time delay: Limitations and opportunities, \emph{IEEE Control System Magazine}, vol. 31(1), pp. 38–65, 2011.
\bibitem{Ref:2}
K. Gu and S. Niculescu, Survey on recent results in the stability and control of time-delay systems, \emph{ASME Journal of Dynamic Systems, Measurement, and Control}, vol. 125, p. 158-165, 2003.
\bibitem{Ref:3}
J. -P. Richard, Time-delay systems: an overview of some recent advances and open problems, \emph{Automatica}, vol. 39(10), pp. 1667–1694, 2003.
\bibitem{Ref:4}
V. L. Khartinov, Robust  stability  analysis  of  time  delay systems: a survey, \emph{Annual Reviews in Control}, vol. 23, pp. 185-196, 1999.
\bibitem{Ref:5}
M. Krstic, \emph{Delay compensation for nonlinear, adaptive, and PDE systems}, Birkh\"{a}user, 2009.
\bibitem{Ref:6}
O. M. Smith, A controller to overcome deadtime, \emph{ISA Journal}, vol. 6, pp. 28-33, 1959.
\bibitem{Ref:7}
Z. Artstein, Linear systems with delayed controls: a reduction, \emph{IEEE Transactions on Automatic Control}, vol. 27, pp. 869–876, 1982.
\bibitem{Ref:8}
A. Z. Manitius, and A. W. Olbrot, Finite spectrum assignment problem for systems with delays, \emph{IEEE Transactions on Automatic Control}, vol. 24, pp. 541-553, 1979.
\bibitem{Ref:28}
V. L. Kharitonov, An extension of the prediction scheme to the case of systems with both input and state delay, \emph{Automatica}, vol. 50, pp. 211-217, 2014.
\bibitem{Ref:9}
R. Lozano, P. Castillo, P. Garcia, and A. Dzul, Robust prediction based control for unstable delay systems: Application to the yaw control of a mini-helicopter, \emph{Automatica}, vol. 40(4), pp. 603-612, 2004.
\bibitem{Ref:10}
D. Yue and Q. -L. Han, Delayed feedback control of uncertain systemswith time-varying input delay, \emph{Automatica}, vol. 41(2), pp. 233-240, 2005.
\bibitem{Ref:11}
Z. Wang, P. Goldsmith, and D. Tan, Improvement on robust control of uncertain systems with time-varying input delays, \emph{IET Control Theory and Applications}, vol. 1(1), pp. 189-194, 2007.
\bibitem{Ref:12}
A. Gonzalez , A. Sala , P. Garcia and P. Albertos, Robustness analysis of discrete predictor-based controllers for input-delay systems, \emph{International Journal of Systems Science}, vol. 44(2), pp. 232-239, 2013.
\bibitem{Ref:13}
A. Gonzalez, A. Sala and P. Albertos, Predictor-based stabilization of discrete time-varying input-delay systems, \emph{Automatica}, vol. 48, pp. 454-457, 2012.

\bibitem{Ref:19}
Z.-Y. Li, B. Zhou and Z. Lin, On robustness of predictor feedback control of linear systems with input delays, \emph{Automatica}, vol. 50, pp. 1497-1506, 2014.

\bibitem{Ref:27}
I. Karafyllis and M. Krstic, Delay-robustness of linear predictor feedback without restriction on delay rate, \emph{Automatica}, vol. 49, pp. 1761-1767, 2013.



\bibitem{Ref:14}
N. Bekiaris-Liberis and M. Krstic, Delay-adaptive feedback for linear feedforward systems, \emph{Systems and Control Letters}, vol. 59, pp. 277-283, 2010.



\bibitem{Ref:16}
D. Bresch-Pietri and M. Krstic, Adaptive trajectory tracking despite unknown input delay and plant parameters, \emph{Automatica}, vol. 45, pp. 2074-2081, 2009.
\bibitem{Ref:17}
J. Zhou, C. Wen and W. Wan, Adaptive backstepping control of uncertain systems with unknown input time-delay, \emph{Automatica}, vol. 45, pp. 1415-1422, 2009.


\bibitem{Ref:15}
X. Li, X. Yu, Q. Han and C. Wang, Stability Analysis of Time-Delayed Single-Input Sliding Mode Control Systems, \emph{34th Annual Conference of the IEEE Industrial Electronics Society}, pp. 2947-2950, 2008.

\bibitem{Ref:29}
X. Zhang, H. Su and R. Lu, Second-Order Integral Sliding Mode Control for Uncertain Systems with Control Input Time Delay Based on Singular Perturbation Approach, \emph{IEEE Transactions on Automatic Control} DOI 10.1109/TAC.2015.2411991, 2015.
\bibitem{Ref:30}
Y.-H. Roh and J.-H. Oh, Sliding Mode Control with Uncertainty Adaptation for Uncertain Input-delay Systems, \emph{American Control Conference}, pp. 636-639, 2010.
\bibitem{Ref:20}
M. Jankovic, Control Lyapunov-Razumikhin Functions and Robust Stabilization of Time Delay Systems, \emph{IEEE Transactions On Automatic Control}, vol. 46(7), pp. 1048-1060, 2001.
\bibitem{Ref:21}
C. Hua, G. Feng and X. Guan, Robust controller design of a class of nonlinear time delay systems via backstepping method, \emph{Automatica}, vol. 44, pp. 567–573, 2008.
\bibitem{Ref:22}
M. Henson and D. Seborg, Time delay compensation for nonlinear processes, \emph{Industrial and Engineering Chemistry Research}, vol. 33(6), pp. 1493-1500, 1994.
\bibitem{Ref:24}
M. Krstic, Input delay compensation for forward complete and strict feedforward nonlinear systems, \emph{IEEE Transactions on Automatic Control}, vol. 55, pp. 287-303, 2010.
\bibitem{Ref:25}
A. Ailon and M. Gil, Stability analysis of a rigid robot with output-based controller and time delay, \emph{Systems and Control Letters}, vol. 40(1), pp. 31-35, 2000.
\bibitem{Ref:26}
A. Ailon, Asymptotic stability in a flexible-joint robot with model uncertainty and multiple time delays in feedback, \emph{Journal of the Franklin Institute}, vol. 341(6), pp. 519-531, 2004.
\bibitem{Ref:31}
N. Sharma, S. Bhasin, Q. Wang and W. E. Dixon, Predictor based control for an uncertain euler-lagrange system with input delay, \emph{Automatica}, vol. 47(11), pp. 2332-2342, 2011.
\bibitem{Ref:32}
N. Fischer, R. Kamalpurkar, N. Fitz-Coyand and W. E. Dixon, Lyapunov-based control of an uncertain euler-lagrange system with time varying input delay, \emph{American Control Conference}, pp. 3919-3924, 2012.
\bibitem{Ref:33}
R. Kamalpurkar, N. Fischer, S. Obuz and W. E. Dixon, Time-varying input and state delay compensation for uncertain nonlinear systems, \emph{arXiv:1501.03810v1 [cs.SY]}, 2015.
\bibitem{Ref:34}
H. Yousef, Design of adaptive fuzzy-based tracking control of input time delay nonlinear systems, \emph{Nonlinear Dynamics}, vol. 79, pp. 417-426, 2015.
\bibitem{Ref:35}
A. Safa, M. Baradarannia, H. Kharrati and S. Khanmohammadi, Global attitude stabilization of rigid spacecraft with unknown input delay, \emph{Nonlinear Dynamics}, DOI 10.1007/s11071-015-2265-0, 2015.
\bibitem{Ref:36}
G. Leitmann, On the efficiency of nonlinear control in uncertain linear systems, \emph{ASME Journal of Dynamics, Measurement and Control}, vol. 103, pp. 95-102, 1981.
\bibitem{Ref:37}
J. K. Hale, \emph{Theory of functional differential equation}, Springer-Verlag, 1977.
\bibitem{Ref:38}
S. Roy, S. Nandy, R. Ray and S. N. Shome, Time delay sliding mode control nonholonomic wheeled mobile robot: experimental validation, \emph{IEEE International Conference on  Robotics and Automation}, pp. 2886-2892, 2014.
\bibitem{Ref:39}
S. Skogestad, Simple analytic rules for model reduction and PID controller tuning, \emph{Journal of Process Control}, vol. 13, pp. 291-309, 2003.

\end{thebibliography}
\end{document}